\newcolumntype{C}[1]{>{\centering\let\newline\\\arraybackslash\hspace{0pt}}m{#1}}
\newcolumntype{P}[1]{>{\centering\arraybackslash}p{#1}}
\newtheorem{theorem}{Theorem}
\newtheorem{lemma}{Lemma}
\newtheorem{corollary}{Corollary}
\newcommand{\Z}{\mathbb Z}
\newcommand{\N}{\mathbb N}
\newcommand{\R}{\mathbb R}
\newcommand{\ul}[1]{\underline{#1}}
\newcommand{\pre}[1]{\overset{\leftarrow}{#1}}
\newcommand{\prey}[1]{h^{-}({#1})}
\newcommand{\EXTRA}[1]{}
\title{A recursive function coding number-theoretic functions}
\author{Vesa Halava\thanks{Supported by emmy.network foundation
under the aegis of the Fondation de Luxembourg.}, Tero Harju, Teemu Pirttimäki
\\ Department of Mathematics and Statistics\\ University of Turku, Finland\\
Email: \texttt{\{vesa.halava, tero.harju, tealpi\}@utu.fi}
}
\begin{document}

\maketitle

\begin{abstract}
We show that there exists a fixed recursive function $e$ such that 
for all functions $h\colon \N\to \N$, 
there exists an injective function $c_h\colon \N\to \N$ such that $c_h(h(n))=e(c_h(n))$ for all $n \in \N$;
i.e., $h=c_h^{-1}ec_h$.
\end{abstract}

\section{Introduction}

This article is motivated by Woodin's theorem on computability~\cite{Woodin} and its entertaining implication stating that ``every function is computable'', where functions are considered to be on natural numbers, and ``computable'' refers to Turing computability 
in the models of Peano arithmetic. 
To a computer scientist or a mathematician coming from formal languages 
this seems quite strange. Non-computable functions should not be computable! 

The actual Woodin theorem on computability considers computations of Turing Machines in nonstandard models of arithmetic. Indeed, the Woodin theorem states a much stronger  model-theoretical result of computability in non-standard models and the existence of  end-extension models. We do not state the  Woodin theorem in full, because it is quite involved and lies outside the scope of 
our treatment. We shall shortly return to Woodin's theorem after we have introduced our main result.

We shall show that there exists a ``universal'' recursive function $e \colon \N \to \N$ that ``computes'' all functions $h\colon \N\to \N$ after we encode the inputs and outputs by an injective coding $c_h$; i.e.,
$$
h(n)=c_h^{-1}e c_h(n) \text{ for all } n \in \N\,.
$$

The universality questions for \emph{real} functions go back to 
Sierpi\'nski~\cite{Sierpinski}; see Larson et al.~\cite{Larson}.
Rado~\cite{Rado} considered the problem in a more general setting.
Sierpi\'nski showed that 
under the assumption of the Continuum Hypothesis there exists
a fixed Borel function $B \colon \R^2 \to \R^2$ such that 
for every function $h \colon \R^2 \to \R$, there exists a function $\varphi$
satisfying $h(x,y)= B(\varphi(x),\varphi(y))$. Larson et al.~\cite{Larson}
show that without the Continuum Hypothesis, there may not exist
universal functions on the reals.

The idea behind Woodin's theorem is quite different from ours. Indeed, the Turing Machine computing all functions in Woodin's result is non-halting, implying that it does not compute any recursive or partially recursive function in the standard computational sense. The idea of the construction is to use nonstandard computation where the machine, given input $n$, outputs $h(n)$ after infinitely many steps. Nonstandard computation is a model-theoretical concept and the infinitely many steps corresponds to the nonstandard (infinite) elements in a nonstandard model of Peano arithmetic. These nonstandard elements are greater than any natural number. In ouir results the function is recursive in the standard computational setting.

The model theoretical property of end-extensions in the Woodin's theorem is (in a simple setting to present the idea) the following: Given an initial fragment or part of a function $h \colon \N \to \N$ and a model $\mathcal{N}$ (with universe $N$) of Peano arithmetics where the Turing machine computes the initial part, there exists a model $\mathcal{M}$ (with universe $M$) where the Turing machine computes the full function $h$, such that $\mathcal{M}$ is an \emph{end-extension} of $\mathcal{N}$ in the sense that $N \subseteq M$ and for all $m \in M$, if there exists $n \in N$ such that $m \leq n$, then $m \in N$.

In the treatment below, we consider only number-theoretic functions 
$h\colon \N \to \N$ of a single argument. Our approach does not
involve model theory like Woodin's theorem does.

\section{Trees of functions}

In our proofs we use directed graphs induced by functions on the set $\N = \{0,1,2,3, \dotsc\}$.  
Let $h\colon \N\to \N$ be a function and $B\subseteq \N$. 
The image of $B$ under $h$ is the set
$$
h(B)=\{h(n) \mid n\in B\}.
$$

We consider the directed (possibly infinite) graph $G_{(h,B)} = (V_{(h,B)},E_{(h,B)})$ allowing loops where the set of \emph{vertices} is  $V_{(h,B)}=B\cup h(B)$, and the set of \emph{edges} is $E_{(h,B)} = \{(n,h(n)) \mid n\in B \}\subseteq V_{(h,B)} \times V_{(h,B)}$. The edges are directed.
Such graphs induced by a function $h$ on some subset $B\subseteq \N$ are called \emph{function graphs}. 

Denote by $G_h=G_{(h,\N)}=(\N,H)$ the \emph{full (function) graph of} $h$, where $H=\{(n,h(n))\mid n\in \N\}$.

\begin{lemma}
A graph $G = (V,E)$, where $V \subseteq \N$, is a function graph if and only if all vertices have at most one outgoing edge.
\end{lemma}
\begin{proof}
Let $G_{(h, B)}$ be a function graph. Then, because $h$ is a function, for the vertices $n \in B$ there is exactly one value $h(n)$ and hence exactly one outgoing edge. The vertices in $h(B) \setminus B$ have no outgoing edges. Hence every vertex of $V_{(h,b)}$ has at most one outgoing edge.

Then, let $G = (V,E)$, where $V \subseteq \N$, be a graph such that every vertex has at most one outgoing edge. Now $E$ is a function on some subset of $V \subseteq \N$.
\end{proof}

Let $G=(V,E)$ be the graph of a function $h$. A vertex $h(n)$ is  
the \emph{successor of} $n\in V$ in $G$ for $h$. 
We let
\begin{align*}
h^+(n) &= \{v \mid v=h^k(n) \text{ for some } k \ge 1\},\\
\prey{n} &= \{v \mid n=h^k(v) \text{ for some } k \ge 1\}
\end{align*}
denote the sets of the \emph{descendants} and \emph{predecessors} of $n$,
respectively. 

Note that the set of \emph{immediate predecessors} $h^{-1}(n)$ can be infinite.
It also can be empty.

A \emph{path} in a function graph is a sequence of vertices that is of the form $n, h(n), h^2(n),\dots ,h^k(n)$ for some $k \ge 1$ and no vertices repeat.
A \emph{cycle} in a function graph is a \emph{closed path}, that is, it is a sequence of vertices that is of the form $h(n), h^2(n),\dots ,h^k(n)=n$ 
for some $k\ge 1$, where the vertices are all different.
The number $k$ is called the \emph{length} of the cycle.   
In the case $k=1$, the sequence is simply $n$; this denotes a \emph{loop}, 
an edge of the form $(n,n)$.

For vertices $u$ and $v$ of a function graph, 
we say that $v$ \emph{reaches} $u$ or $u$ \emph{is reachable from} $v$ if there is a path from $v$ to $u$;
i.e., if $u$ is a descendant of $v$.

\begin{lemma}
Consider a graph of a function $h$.
If $u$ and $v$ belong to a common finite or infinite path, then
one of them is a descendant of the other:
either $h^k(u)=v$ or $h^k(v)=u$ for some $k\in  \N$.
\end{lemma}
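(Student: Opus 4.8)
The plan is to extract the shape of an arbitrary directed path in a function graph from the uniqueness of outgoing edges recorded in Lemma~1, and then to read off the descendant relation by simply counting steps along that path.

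First I would fix a finite or infinite path $P$ in $G$ that contains both $u$ and $v$, and list its vertices in path order as a sequence $(w_i)_{i\in I}$, where $I$ is an interval of consecutive integers (finite, or one-sidedly or doubly infinite). Since $P$ is a path in $G$, any two consecutive vertices $w_i,w_{i+1}$ are joined by an edge directed from $w_i$ to $w_{i+1}$; but in a function graph every edge has the form $n\to h(n)$, so $w_{i+1}=h(w_i)$ whenever $i,i+1\in I$. A trivial induction then gives $w_j=h^{\,j-i}(w_i)$ for all $i\le j$ in $I$.

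Now, since $u$ and $v$ both lie on $P$, write $u=w_i$ and $v=w_j$; the linear order of $P$ lets me assume without loss of generality that $i\le j$. The identity above yields $v=w_j=h^{\,j-i}(u)$, so with $k=j-i\in\N$ we obtain $h^k(u)=v$, which is the desired conclusion (the roles of $u$ and $v$ reverse if instead $j\le i$). If $u=v$ one may take $k=0$; if $u\neq v$ then $k\ge 1$ and $v$ is genuinely a descendant of $u$ in the sense defined above.

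I do not expect a serious obstacle here; the only two points needing a little care are: (i) that ``path'' must be read in the directed sense, so that consecutive edges point the same way and the reduction to $w_{i+1}=h(w_i)$ is legitimate — this is precisely where Lemma~1 enters — and (ii) that allowing $P$ to be infinite creates no difficulty, since passing from $w_i$ to $w_j$ traverses only the finitely many edges between indices $i$ and $j$, irrespective of how far $P$ extends in either direction.
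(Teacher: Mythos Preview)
Your argument is correct and is essentially a fully unpacked version of the paper's own proof, which consists of the single sentence ``This is clear from the definition of a function graph.'' You have simply made explicit the induction $w_j=h^{\,j-i}(w_i)$ that the paper leaves to the reader, and your remark (i) that ``path'' must be read in the directed sense is exactly the point being appealed to.
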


\begin{proof}
This is clear from the definition of a function graph.
\end{proof} 

Let $n \in \N$. The \emph{connected component of $n$} in the full function graph $G_h=(\N,H)$ is the subgraph $G_h^n=G_{(h,V)}$ such that $n\in V$ and for all $m\in \N$, $m\in V$ if and if only if $n$ and $m$ have a common descendant. A subgraph of $G_h$ is called a \emph{connected component of $G_h$} if it is the connected component for some vertex $n \in \N$.

\begin{lemma}
A connected component of $G_h$ can have at most one cycle.
\end{lemma}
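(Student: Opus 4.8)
The plan is to argue by contradiction: assume some connected component of $G_h$ contains two cycles $C_1$ and $C_2$, and then show that necessarily $C_1 = C_2$. The strategy rests on a simple "absorption" property of cycles, together with the fact that two vertices in one component always share a descendant.

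First I would record the absorption property. By the first lemma every vertex has a unique outgoing edge, so for each $w$ the forward sequence $w, h(w), h^2(w), \dots$ is well defined. If $w$ lies on a cycle of length $k$, then $h^k(w)=w$, hence this sequence is purely periodic and its underlying vertex set is exactly $\{w,h(w),\dots,h^{k-1}(w)\}$, which is the cycle through $w$. Two consequences: the cycle through a given vertex is uniquely determined by that vertex, and for every $j\ge 0$ the vertex $h^j(w)$ still lies on that same cycle. In particular, two cycles that share even a single vertex must be equal.

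Next I would establish that any two vertices $u,v$ in one connected component have a common descendant. By the definition of the component of $n$, the pair $n,u$ has a common descendant $d_1=h^a(n)=h^b(u)$, and the pair $n,v$ has a common descendant $d_2=h^c(n)=h^d(v)$. Both $d_1$ and $d_2$ lie on the forward orbit of $n$, so they lie on a common path, and Lemma~2 forces one to be a descendant of the other, say $d_2=h^m(d_1)$. Then $h^{b+m}(u)=h^m(d_1)=d_2=h^d(v)$, so $d_2$ is a common descendant of $u$ and $v$.

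Finally I would combine these: pick $u\in C_1$ and $v\in C_2$, take a common descendant $w=h^s(u)=h^t(v)$, and invoke absorption twice to conclude $w\in C_1$ and $w\in C_2$; hence $C_1$ and $C_2$ intersect and therefore coincide, contradicting distinctness. The only mildly delicate point is the transitivity-of-common-descendants step in the previous paragraph, where Lemma~2 does the real work; the rest is bookkeeping about forward orbits under $h$.
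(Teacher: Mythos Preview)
Your proof is correct and follows essentially the same idea as the paper's: both rest on the observation that, because every vertex has a unique outgoing edge, the forward orbit of any cycle vertex is trapped inside that cycle. The paper compresses this into two sentences, whereas you spell out the absorption property and the transitivity of ``having a common descendant'' (via Lemma~2) explicitly; your version is simply the rigorous unpacking of the paper's sketch.
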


\begin{proof}
Follows from the fact that $h$ is a function. 
Indeed, if a connected component has a cycle, it is unique as all vertices have a unique outgoing edge. This means that the connected component of $G_h$ ``ends'' in the cycle.
\end{proof}

All vertices $n$ in the full function graph $G_h$ start 
an infinite sequence $(h^k(n))_{k=0}^\infty$. 
This sequence may be ultimately periodic, 
which means that the iteration of $h$ starting from $n$ leads to a cycle.  
Otherwise, the sequence is an infinite path.

\begin{lemma}\label{form1}
The connected component $G_h^n$ of $G_h$ either 
\begin{enumerate}
\item has a unique cycle and for all vertices $m$ in $G_h^n$, the sequence 
$(h^k(m))_{k=0}^\infty$ is ultimately periodic and leads to the unique cycle, 
\item has no cycles and for all vertices $m$ in $G_h^n$, the sequence 
$(h^k(m))_{k=0}^\infty$ is not ultimately periodic.
\end{enumerate}
For all vertices $m$ in the component $G_h^n$, there exist natural numbers $k_1$ and $k_2$ such that $$h^{k_1+i}(n)= h^{k_2+i}(m)$$
for all $i\in \N$. 
\end{lemma}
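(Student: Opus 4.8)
The plan is to prove Lemma~\ref{form1} by first establishing the dichotomy (parts 1 and 2) and then deriving the ``eventual confluence'' statement from it. For the dichotomy, I would start with an arbitrary vertex $m$ in $G_h^n$ and look at its forward sequence $(h^k(m))_{k=0}^\infty$. By the preceding lemma, a connected component has at most one cycle, so there are only two cases: either $G_h^n$ contains a cycle or it does not. If it contains a cycle $C$, I would argue that the forward orbit of \emph{every} vertex $m$ in $G_h^n$ reaches $C$: since $m$ is in the connected component of $n$, there is a common descendant $w$ of $m$ and $n$, and the forward orbit of any vertex on the cycle stays on the cycle, so tracing $n$'s orbit into $C$ and using the shared descendant $w$ forces $m$'s orbit into $C$ as well; hence $(h^k(m))_{k=0}^\infty$ is ultimately periodic. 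Conversely, if some vertex $m$ had an ultimately periodic forward orbit, that orbit would contain a cycle (the periodic part), contradicting the no-cycle assumption; so in the cycle-free case every forward sequence is non-ultimately-periodic.

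Next, for the final displayed equation, fix any vertex $m$ in $G_h^n$. By definition of the connected component, $m$ and $n$ have a common descendant $w$, i.e., there exist $k_1, k_2 \in \N$ with $h^{k_1}(n) = w = h^{k_2}(m)$. Applying $h^i$ to both sides and using that $h^i \circ h^{k_j} = h^{k_j + i}$, I get $h^{k_1+i}(n) = h^{k_2+i}(m)$ for every $i \in \N$, which is exactly the claim. This part is essentially immediate once the definition of connected component is unpacked; the phrase ``common descendant'' in the definition is doing all the work.

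I expect the main obstacle — really the only point requiring care — to be the cycle case of the dichotomy: showing that a common descendant $w$ of $m$ and $n$ actually lies on (or flows into) the cycle, rather than just arguing loosely that ``the component ends in the cycle.'' The clean way is: once the component has a cycle $C$, pick any $c \in C$; for any vertex $v$ in the component, $v$ and $c$ share a descendant $w = h^{a}(v) = h^{b}(c)$, but $h^b(c) \in C$ since $C$ is $h$-invariant, so $h^a(v) \in C$, meaning $v$'s forward orbit hits $C$ and is therefore ultimately periodic with period equal to the length of $C$. One should also note that this $w$-argument tacitly uses the symmetric, transitive nature of ``having a common descendant'' on a function graph, which follows from the fact (Lemma~2 in the excerpt) that two vertices on a common path are comparable under iteration of $h$; I would cite that lemma rather than re-prove it. Everything else is bookkeeping with the identities $h^{j} \circ h^{k} = h^{j+k}$.
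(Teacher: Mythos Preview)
Your proposal is correct and follows essentially the same approach as the paper: both arguments rest on the defining property that every vertex $m$ of $G_h^n$ shares a common descendant with $n$, from which the periodicity equivalence and the displayed identity $h^{k_1+i}(n)=h^{k_2+i}(m)$ follow immediately. Your write-up is simply more explicit than the paper's (in particular, your care about why the shared descendant lands on the cycle and your remark on transitivity of ``having a common descendant'' spell out steps the paper leaves implicit).
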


\begin{proof}
This follows from the fact that all vertices of the $G_h^n$ have 
a common descendant with $n$. Hence, for all vertices $m$, 
$(h^k(m))_{k=0}^\infty$ is ultimately periodic iff   $(h^k(n))_{k=0}^\infty$ ultimately periodic. 

Also if $m$ and $n$ have a common descendant, they both reach a vertex $w=h^{k_1}(n)=h^{k_2}(m)$. 
\end{proof}

In order to characterize the connected components of a function graph
we need function trees.
A subgraph $G=(V,E)$ of a function graph is a \emph{function tree} if there is a unique \emph{root} $u\in V$ such that all vertices in $V \setminus \{u\}$ reach $u$. Naturally, all function trees are trees, and they can be infinite. The \emph{depth} of a function tree is the length of the longest path in it.
Clearly all finite trees have finite depth.  

Let $n\in \N$ be a vertex of $G_h$ for a function $h$ and let 
$C_h^n$ be the set of the vertices in the cycle of $G_h$ containing $n$.
Hence, if $n$ does not belong to a cycle then $C_h^n=\emptyset$. 
Denote by 
$$
\pre{n}=\{v \mid \exists k\ge 1 \colon h^k(v)=n \text{ and } h^{k-1}(v)\notin C_h^n \},
$$ 
the set of all predecessors of $n$ that reach $n$ via a path not visiting 
the cycle of $n$ (if $n$ belongs to one), that is, reach $n$ in the subgraph of $G_{(h,B)}$ induced by $(V_{(h,B)} \setminus C_n^h) \cup \{n\}$. If $n$ has 
only one immediate predecessor and it is in $C_h^n$, then $\pre{n}=\emptyset$. If the vertex $n$ does not belong to a cycle, then $\pre{n}=h^{-}(n)$.
Let then 
$$
O_h^n=G_{(h,\pre{n})}.
$$

\begin{lemma}
Subgraph $O_h^n$ is a function tree with $n$ as a root.  
\end{lemma}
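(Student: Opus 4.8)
The plan is to identify the vertex and edge sets of $O_h^n=G_{(h,\pre{n})}$ and then check the three parts of the definition of a function tree: that it is a subgraph of a function graph (it is literally the function graph $G_{(h,\pre{n})}$, so this is free), that $n$ is a root reached by every other vertex, and that it is acyclic. If $\pre{n}=\emptyset$ the graph is degenerate and there is nothing to do, so assume $\pre{n}\neq\emptyset$. First I would show $h(\pre{n})\subseteq\pre{n}\cup\{n\}$: given $w=h(v)$ with $v\in\pre{n}$, take the witness $k\ge1$ with $h^k(v)=n$ and $h^{k-1}(v)\notin C_n^h$; if $k=1$ then $w=n$, and if $k\ge2$ then $h^{k-1}(w)=n$ while $h^{k-2}(w)=h^{k-1}(v)\notin C_n^h$, so $w\in\pre{n}$. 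Conversely $n\in h(\pre{n})$, because the immediate predecessor $h^{k-1}(v)$ of $n$ produced by any $v\in\pre{n}$ itself lies in $\pre{n}$ (witness $1$) and maps to $n$. Also $n\notin\pre{n}$: if $h^k(n)=n$ for some $k\ge1$ then $n$ lies on a cycle and $h^{k-1}(n)\in C_n^h$, which is forbidden. Hence $V_{(h,\pre{n})}=\pre{n}\cup h(\pre{n})=\pre{n}\sqcup\{n\}$, the edge set is $\{(v,h(v)):v\in\pre{n}\}$, in bijection with $V_{(h,\pre{n})}\setminus\{n\}$, and in $O_h^n$ the vertex $n$ has out-degree $0$ while every other vertex has out-degree exactly $1$.

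Next, for the root condition, take $v\in\pre{n}$ with a witness $k$ as above. Then $h^j(v)\in\pre{n}$ for $0\le j<k$ (use the witness $k-j$, whose immediate-predecessor vertex is still $h^{k-1}(v)\notin C_n^h$), and $h^k(v)=n$; so the directed path $v\to h(v)\to\cdots\to h^k(v)=n$ lies entirely inside $O_h^n$ and exhibits $n$ as a descendant of $v$. This covers every vertex of $O_h^n$ except $n$ itself, so $n$ is a root. Uniqueness of the root is then automatic: if $u'$ were another root, $n$ would be connected to $u'$, i.e., $u'$ would be a descendant of $n$ in $O_h^n$; but $n$ has out-degree $0$ there, hence no descendants, forcing $u'=n$.

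Finally I would show $O_h^n$ is a tree. It is connected by the previous paragraph (every vertex reaches $n$), so only acyclicity remains. I would assign to each $w\in\pre{n}$ the potential $d(w)=$ the least $k\ge1$ with $h^k(w)=n$, and set $d(n)=0$; one checks that this $k$ is the only admissible witness in the definition of $\pre{n}$ and that $d(h(v))=d(v)-1$ along every edge $v\to h(v)$ of $O_h^n$. An undirected cycle in $O_h^n$ would then contain a vertex $w$ of maximal potential, and both of its cycle-edges would have to be oriented out of $w$ (an edge oriented into $w$ would give its other endpoint potential $d(w)+1$, contradicting maximality), so $w$ would have out-degree $\ge2$ — impossible in a subgraph of a function graph. (The one thing to double-check is that $O_h^n$ is simple: a repeated edge would force a $2$-cycle on some $v$ and $h(v)$ with both in $\pre{n}$, contradicting $n\notin\pre{n}$; this also guarantees a cycle has three distinct vertices, so the two cycle-neighbours of $w$ really are distinct.) Thus $O_h^n$ is connected and acyclic, i.e., a tree rooted at $n$. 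The only real obstacle is this last step: since the graph may be infinite one cannot use the ``connected with $|V|-1$ edges'' shortcut, and the monotone potential $d$ together with the out-degree bound inherited from the ambient function graph seems the cleanest substitute; everything else is bookkeeping with the definition of $\pre{n}$.
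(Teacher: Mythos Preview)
Your proof is correct and follows the same opening move as the paper: you compute $V_{(h,\pre{n})}=\pre{n}\cup h(\pre{n})=\pre{n}\cup\{n\}$ by showing that $n$ is the only $h$-image of an element of $\pre{n}$ that is not itself in $\pre{n}$. The paper stops there and simply declares ``Clearly, $O_h^n$ is a function tree,'' whereas you go on to verify the root property, its uniqueness, and acyclicity via the strictly decreasing potential $d$. So the approach is the same; you have just filled in the details the paper leaves implicit.
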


\begin{proof}
By the definition of the sets $\pre{n}$, we have $O_h^n=G_{(h,\pre{n})}=\left(V_{(h, \pre n)},E_{(h, \pre n)}\right)$ where $V_{(h, \pre n)} = \pre{n}\cup h(\pre{n})=\pre{n}\cup\{n\}$, since $n$ is the only element of $h(\pre{n})$ not in $\pre{n}$. By the construction of $O_h^n$, the vertex $n$ is reachable from any other vertex and there are no cycles. There is also no edge from $n$ to any other vertex, making $O_h^n$ is a function tree with $n$ as its unique root.
\end{proof}

\begin{lemma}\label{lem:formofcc}
In the full function graph $G_h=G_{(h,\N)}$,
the connected components $G_h^n$, where $n \in \N$, are either 
\begin{enumerate}
\item a cycle $C_h^m$ together with the all function trees $O_h^{u}$ with $u\in C_h^m$, or  
\item the function tree $O_h^n$ together with the function trees $O_h^m$ 
for the (infinitely many) descendants $m$ of $n$. 
\end{enumerate}  
\end{lemma}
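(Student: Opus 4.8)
The plan is to carry out a case analysis according to whether the connected component $G_h^n$ contains a cycle or not, using Lemma \ref{form1} to make this dichotomy precise. First I would fix the vertex $n$ and recall that, by definition, a vertex $m$ lies in $G_h^n$ if and only if $m$ and $n$ share a common descendant. In both cases the strategy is the same: show that the vertex set of $G_h^n$ is exactly the union of the vertex sets of the trees $O_h^u$ (plus the cycle, in the first case), and then observe that the edge sets agree automatically, since every edge of $G_h$ incident to a vertex of a given set is determined by $h$ (each vertex has a unique outgoing edge, by Lemma 1).

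In case (1), suppose $G_h^n$ contains a cycle; by Lemma \ref{form1} this cycle is unique, call its vertex set $C=C_m^h$ for any $m$ on it. I would argue that every vertex $v$ of $G_h^n$ either lies on $C$ or has a first point of entry into $C$: since $v$ and $n$ have a common descendant $w=h^{k_1}(v)=h^{k_2}(n)$ and the sequence from $n$ is ultimately periodic leading to $C$, the sequence $(h^k(v))$ also reaches $C$; letting $k$ be minimal with $h^k(v)\in C$ and $u=h^k(v)$, we get $h^{k-1}(v)\notin C$ (by minimality, if $k\ge 1$), so $v\in\pre{u}$. Conversely every vertex of every $O_h^u$ with $u\in C$ is a predecessor of a vertex on $C$, hence shares the descendant $u$ with $n$, so lies in $G_h^n$. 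This shows $V_{(h,\N)}$ restricted to the component is $C\cup\bigcup_{u\in C}\pre{u}$, i.e. the component is $C$ together with the trees $O_h^u$, $u\in C$. (I note a harmless typo in the statement: $C_h^m$ should read $C_m^h$.)

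In case (2), $G_h^n$ has no cycle, so by Lemma \ref{form1} no vertex has an ultimately periodic forward orbit, and the forward orbit of $n$ is an infinite simple path $n, h(n), h^2(n),\dots$ whose vertices are the descendants $h^+(n)$ together with $n$ itself. Given any $v$ in the component, $v$ and $n$ have a common descendant $w$, and since $C_v^h=\emptyset$ here we have $\pre{w}=h^-(w)$, so $v\in\pre{w}$ with $w=h^{k}(n)$ for some $k\ge 0$; thus $v$ belongs to $O_h^{n}$ (if $k=0$) or to $O_h^{m}$ for the descendant $m=h^k(n)$. Conversely any vertex in $O_h^n$ or any $O_h^m$ with $m\in h^+(n)$ is a predecessor of a descendant of $n$, hence shares a descendant with $n$. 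Therefore the component is the union of $O_h^n$ with the $O_h^m$, $m\in h^+(n)$, and these descendants are infinitely many because the forward orbit of $n$ is an infinite simple path.

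The main obstacle I anticipate is purely bookkeeping: making sure the trees $O_h^u$ are pairwise disjoint except at the shared ``spine'' (the cycle in case (1), the forward orbit of $n$ in case (2)) so that the claimed decomposition is genuinely a partition of the component and the edge sets glue correctly. This follows from the minimality in the choice of the entry point $k$ together with the fact that each vertex has a unique successor (Lemma 1), which forces the path from any vertex to the spine to be unique; I would state this explicitly rather than leave it implicit, since it is the only place where the argument is not a direct unwinding of definitions.
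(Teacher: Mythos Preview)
Your proposal is correct and follows essentially the same approach as the paper: invoke Lemma~\ref{form1} for the cycle/no-cycle dichotomy, then in each case show that every vertex of the component either lies on the ``spine'' (the cycle, or the forward orbit of $n$) or reaches it and hence sits in the appropriate $O_h^u$. Your write-up is considerably more detailed than the paper's terse two-paragraph argument, which is fine.

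One small correction to your final paragraph: in case~(2) the trees $O_h^m$ for $m$ on the forward orbit of $n$ are not ``pairwise disjoint except at the spine'' but rather \emph{nested}, since $\pre{n}\subseteq \pre{h(n)}\subseteq \pre{h^2(n)}\subseteq\cdots$ (no cycle means $\pre{m}=h^-(m)$ for every such $m$). The lemma only asserts that the component is the \emph{union} of these trees, not a partition, so no disjointness argument is needed there; you can simply drop that concern for case~(2). In case~(1) your minimal-entry-point argument does give genuine disjointness of the $O_h^u$ for $u$ on the cycle, and that observation, while not strictly required by the lemma as stated, is indeed used implicitly later in the paper (Lemma~\ref{lem:cyclesg}).
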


\begin{proof}
By Lemma~\ref{form1} we have two cases:

1) If  a connected component contains a cycle, then it is of the first form and $u = h^k(n)$, where $k$ is the smallest natural number for which $h^k(n) \in C_h^m$.

2) On the other hand, if the connected component does not contain a cycle, 
then it is a directed tree such that from all of its vertices 
there begins an infinite path. 
By definition, the root of $O_h^n$ is $n$. 
The other vertices in the component have a common descendant with $n$, that is, 
they belong to a function tree $O_h^m$ for some $m\in h^+(n)$.
This proves the claim.
\end{proof}

\section{Universal recursive function $e$}

There are countably many non-isomorphic connected full function graphs and these
are partitioned into two different classes according to whether they contain a cycle.
Moreover, each class contains infinitely many non-isomorphic graphs, but we set out to show that they can all be embedded
in a graph including infinitely many cycles (of all possible lengths) and an infinite number of function trees as subgraphs. The basic idea in the construction of the special function $e$ is 
to pack its function graph with infinitely many times each of the 
above possibilities.

The construction of $e$ can be best explained by considering its image sequence   
$\sigma =e(0), e(1), e(2), \dotsc$. In the sequence $\sigma$, all natural numbers 
occur infinitely many times. It can be separated to two cases: 
(1) First, for the powers $n=2^k$, the iterative sequence $(e^i(n))_{i=0} ^\infty$ 
is defined so that it is not ultimately periodic for all $k \in \N$. 
(2) Secondly, all natural numbers are roots of a function tree with infinite depth. 
For any ultimately periodic sequence of natural numbers, 
the full function graph of $e$ contains infinitely many cycles of all lengths. 
  
We shall now give a construction of the sequence $\sigma$. For a variable $x$ and a function $f$, we write $x \leftarrow f(x)$ for substitution, that is, $x \leftarrow f(x)$ means ``evaluate $f(x)$ and replace the value of $x$ by the result''.

\begin{enumerate}
\item[(i)] For all numbers $x \in \N$ of the form $x=2^k$ for $k \ge 1$ (the nontrivial powers of $2$), let $e(x)=2^{2k}$. These values do not change in later steps.

\item[(ii)] We use three variables $k$, $n$ and $i$, where 
\begin{itemize}
\item
$k$ keeps track of the length of the largest cycle, 
\item
$n$ keeps track of the largest value of $x$ for which $e(x)$ is determined such that $x$ is not a power of $2$, and 
\item
$i$ keeps track of the current number in the sequence. 
\end{itemize}
Set the initial values $k \leftarrow 1$, $n \leftarrow 1$ and $i \leftarrow 0$. Then repeat steps (a) and (b) below in alternating order \emph{ad infinitum}:
\begin{enumerate}
\item Create cycles of all lengths $j = 1, \dotsc, k$. The creation of a cycle is as follows: In the case $j = 1$, we have a fixed point $e(i) = i$. Otherwise,
if there are no powers of $2$ in  $\{i,\dots,i+(j-1)\}$, 
then the cycle of length $k$ will be 
\begin{align*}
e(i) 			& = i + j - 1	\\
e(i + 1) 		& = i			\\
e(i + 2)		& = i + 1   	\\
				& \vdots		\\
e(i + j - 1)	& = i + j - 2	
\end{align*}

If the number of nontrivial powers of $2$ in 
$\{i,\dots, i+(j-1)\}$ is $t$, then let $e(i)=i+(j-1)+t$ and skip over the powers of $2$ from both the image and domain when defining the values. Actually, it is an easy number-theoretic exercise to show that in our construction, $t\le 1$.

After the creation of each individual cycle, set $i \leftarrow i + j$. After all the cycles have been created, set $k \leftarrow k+1$.

\item Set $n$ to the the largest value of $x$ for which $e(x)$ is determined such that $x$ is not a power of $2$, and then set $e(i) = 0, e(i + 1) = 1, \dotsc, e(i + n) = n$. Again we skip over the powers of $2$; in every step, $i$ is tested, and if $i=2^k$ for some $k \geq 1$, we set $i \leftarrow i+1$.
\end{enumerate}
 
\end{enumerate}

The first 56 elements of the sequence $\sigma$ are
\begin{equation*}
\begin{split}
&\stackrel{(0)}{\underline{0}},\stackrel{(1)}{0},\stackrel{(2)}{2^2},\stackrel{(3)}{\ul{3}},\stackrel{(4)}{2^{2\cdot 2}},\underline{\stackrel{(5)}{6},\stackrel{(6)}{{5}}},\stackrel{(7)}{0},\stackrel{(8)}{2^{2\cdot 3}},\stackrel{(9)}{1},\stackrel{(10)}{2},\stackrel{(11)}{3},\stackrel{(12)}{4},\stackrel{(13)}{5},\stackrel{(14)}{6},\ul{\stackrel{(15)}{15}}, \stackrel{(16)}{2^{2\cdot 4}}, \\
&\underline{\stackrel{(17)}{18},\stackrel{(18)}{17}},\underline{\stackrel{(19)}{21},\stackrel{(20)}{19},\stackrel{(21)}{20}},\stackrel{(22)}{0},\dots, \stackrel{(31)}{9}, \stackrel{(32)}{2^{2\cdot 5}}, \stackrel{(33)}{10},\dots, \stackrel{(44)}{21},\underline{\stackrel{(45)}{45}}, \\
&\underline{\stackrel{(46)}{47},\stackrel{(47)}{46}},\underline{\stackrel{(48)}{50},\stackrel{(49)}{48},\stackrel{(50)}{49}},\underline{\stackrel{(51)}{54},\stackrel{(52)}{51},\stackrel{(53)}{52},\stackrel{(54)}{53}}, \stackrel{(55)}{0}, \dots
\end{split}
\end{equation*}
where $(i)$ refers to the position in the sequence, and $e(i)$ is written below $(i)$. Underlining denotes the cycles. For example, $e(19)=21, e(20)=19, e(21)=20$ makes a cycle of length three in the function graph.

\begin{theorem}
The function $e$ is recursive.
\end{theorem}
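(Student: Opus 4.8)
The plan is to show that the sequence $\sigma = e(1), e(2), \dots$ is computable by exhibiting an algorithm that, on input $n$, outputs $e(n)$ after finitely many steps. Since $e$ is defined by the two-part recipe in the construction---clause (i) handling the powers of two, and clauses (ii)--(iii) handling everything else through the round-by-round counter procedure---it suffices to argue that each of these pieces is effectively executable and that the bookkeeping needed to decide which clause applies to a given $n$ is decidable.

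\begin{proof}[Proof sketch]
First observe that the predicate ``$n$ is a power of two'' (and, if so, the value of the exponent $k$) is decidable; hence clause~(i), which sets $e(2^k) = 2^{2k}$ for $k \ge 1$, is directly computable: on input $n$ we test whether $n = 2^k$ and, if so, output $2^{2k}$.

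For the remaining inputs, the key point is that the counter procedure of clause~(ii) is a deterministic loop whose state at the start of each round is the triple $(k, n, i)$ of natural numbers, with $k$ the current cycle length, $n$ the largest argument already assigned, and $i$ the next index to be filled. Each round (a) lays down one cycle of every length $\le k$ using the explicit formulas in~\eqref{ecycle}, then (b) writes the block $1, 2, \dots, n$, then (c) skips over indices equal to a power of two; every one of these sub-steps changes $i$ by a bounded, explicitly computable amount, and the decision of which power-of-two indices to skip is decidable by the remark that $t \le 1$. Therefore, starting from the initial state $k = n = 1$, $i = 0$, one can simulate the procedure round by round; since $i$ strictly increases and is unbounded, after finitely many rounds the simulation reaches (and assigns) any prescribed index $m$, at which point the value $e(m)$ is read off. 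This gives a halting algorithm computing $e$ on every input, so $e$ is recursive.
\end{proof}

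The only genuine subtlety---and the step I would check most carefully---is the interaction of the power-of-two skipping in clause~(iii) with the cycle formulas~\eqref{ecycle}: one must be sure that the ``shift past the powers of two'' is itself a total computable operation and that the claimed bound $t \le 1$ holds, so that each cycle block occupies a predictable range of indices and the counters $i$ and $n$ are updated unambiguously. Once that elementary number-theoretic fact is in hand, the recursiveness is just the observation that a deterministic loop over computable arithmetic operations, with a strictly increasing unbounded index, defines a total recursive function.
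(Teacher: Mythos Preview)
Your argument is correct and follows the same idea as the paper: simulate the explicit construction of $\sigma$ step by step until the desired index is reached, then read off the value. The paper's own proof is a single sentence to this effect (``for an input $i$, the value $e(i)$ is computed by the sequence $\sigma$ up to and including $i$ by simply adding a new counter''), so your version is simply a more careful unpacking of the same simulation argument, with the added---and welcome---remarks on why each sub-step (power-of-two test, counter updates, skipping) is effective and why termination is guaranteed.
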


\begin{proof}
This is clear from the above algorithm for the sequence $\sigma$. 
Indeed, for an input $m \in \N$, we can add a new counter to stop the computation when $e(m)$ is reached.
\end{proof}

The function $e$ has the following properties:

\begin{enumerate}

\item The image set $e(\N) = \N$.

\item For all natural numbers $n$, there are infinitely many $x \in \N$ such that $e(x) = n$. 



\item Let $P=\{2^k\mid k\ge 1\}$, $C=\{n\mid n \text{ occurs in a cycle in }G_e\}$, and $T=\N\setminus (P\cup C)$. These sets are all infinite and (pairwise) disjoint, and $\N=P\cup C \cup T$, that is, they form a partition of $\N$.

\item Let $T_n = \pre{n} \setminus P$ for all $n \in P \cup C$. These sets form a partition of $T$.

\item For all $n \in T$, we have $e(n)<n$.

\item For each $k \in \N$, there are infinitely many cycles of length $k$ in $G_e$.

\item For all $m \in \N$, the sequence $(e^i(2^m))_{i = 0}^\infty$ is not ultimately periodic (these correspond to infinite paths in $G_e$). Naturally, there are infinitely many disjoint sequences like this; there is one for each odd value of $m$.

\item For all $m \in \Z_+$, if $k$ is the largest odd factor of $m$, then there exists a unique $i \in \N$ such that $2^m = e^i(2^k)$. If $m$ is odd, then $i = 0$.


\item For all $n\in P\cup C$, the graph $G_{(e,T_n)}$ is a function tree with root $n$, and each vertex of $G_{(e,T_n)}$ has infinitely many incoming edges. 

\item For $n \in C$, $O_e^n=G_{(e,T_n)}$. This follows from the fact that $\pre{n}=T_n$.     

\item For $n \in P$ we have two cases: 
\begin{enumerate}
\item if $n=2^k$ with odd $k$, then 
$O_e^n=G_{(e,T_n)}$. This follows from the fact that $\pre{n}=T_n$.

\item if $n=2^k$ with even $k$, then $\pre{n}=T_n \cup \{2^{\frac k2}\}\cup \pre{2^{\frac k2}}$, and 
\[
O_e^n = \left(\pre{n} \cup \{n\}, E_{(e,T_n)}\cup E_{(e,U_n)}\right),
\]
where $U_n=\left\lbrace 2^{\frac k2}\right\rbrace\cup \pre{2^{\frac k2}}$.
Note that $G_{(e,U_n)}$ consists of $O_e^{2^{\frac k2}}$ 
together with the edge $(2^{\frac k2},n)$.

\end{enumerate} 

\item Combining properties 7 and 9(a), we get that $G_e$ has a countably infinite number of connected subgraphs that contain two-way infinite paths. Indeed, for all $n=2^k$ with odd $k$, $O_e^n$ is a function tree of infinite depth with root $n$, and, for all vertices $v$, the sequence $(e^k(v))_{k=0}^\infty$ is not ultimately periodic.      

\end{enumerate}

We define further, for all vertices $n\in P\cup C$, 
$$V_n = V_{(e, T_n)} = T_n\cup \{n\} \text{ and } E_n = E_{(e, T_n)} = \{(x,e(x)) \mid x \in T_n \}.$$


In the following, an \emph{enumeration (function)} $\varepsilon$ of a set $S$ is an injective mapping from $\N$ to $S$ such that $\varepsilon(n)$ is defined for all $0 \le n < |S|$. Hence, for a finite set $S$, the function $\varepsilon$ defines a one-to-one correspondence between the sets $\{0,1,\dots,|S|-1\}$ and $S$. Note that $S$ may be countably infinite.

Let $n\in P\cup C$. For each $v\in V_n$, we define an enumeration $\beta_{v}$ of the set $e^{-1}(v) \cap T_n$ based on the function $e$. We need to intersect $e^{-1}(v)$ with $T_n$: For $n \in C$, $e^{-1}(n)$ contains also the unique predecessor $m$ of $n$ in the cycle and $m \notin T_n$. Similarly in the case where $n=2^k\in P$ with even $k$, $2^{\frac{k}2}\in e^{-1}(n)\setminus T_n$ (see property 11(b) of $e$).
 
Denote $\sigma_j = e(j)$ for all $j \in \N$. For each $v \in V_n$, we define a sequence $(x_{v,i})_{i=0}^\infty$ as the subsequence of $\sigma$ that consists of exactly those $\sigma_j$ for which $e(\sigma_j) = v$ and $\sigma_j \geq v + 2$. Now let $\beta_v(i)= x_{v,i}$ for all $v \in V_n$ and all $i \in \N$. Then for all $v \in V_n$, $\beta_v$ is an enumeration of the set $e^{-1}(v)\cap T_n$.

The condition $x_{v,i} \geq v+2$ is necessary to exclude vertices in $P \cup C$ from the enumeration: If $n\in C$, then the unique predecessor of $n$ in the cycle of length $t$ is either $n+1$ or $n-(t-1)\le n$. On the other hand, if $n=2^k$ with even $k$, then $n$ has the predecessor $2^{\frac{k}2}$ which is not in $T_n$ and is clearly less than $n$. Hence
\[
e^{-1}(v) \cap T_n = \beta_v(\N) = \{n \in \N \mid e(n) = v \text{ and } n \geq v + 2\}.
\]

In the next lemma we show that every function tree is isomorphic to a subgraph of $G_{(e,T_n)}$ for some $n \in \N$. Instead of $G_{(e,T_n)}$, we could also show the result for every function tree 
with root $m\in T_n$ in $G_e$, but that is unnecessary for what follows.  

\begin{lemma}\label{lem:ot_tn}
Let $h\colon \N\to \N$ be a function and $O=(V,E)$ be a subgraph of $G_h$ that is a function tree with root $r$. 
Then, for all $n\in P\cup C$, there is an injective function $c_O\colon V \to V_n$
such that $c_O(h(m))=e(c_O(m))$ for all $m \in V\setminus\{r\}$, and $c_O(r)=n$.  
\end{lemma}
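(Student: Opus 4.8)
The plan is to construct $c_O$ by induction on the depth of vertices in the tree $O$, using the enumeration functions $\beta_v$ to embed successive ``levels'' of incoming edges. Since $O$ is a function tree rooted at $r$, every vertex $m\neq r$ has a well-defined distance $d(m)\ge 1$ to the root along the unique path $m\to h(m)\to\cdots\to r$, and the vertices at distance exactly $j$ have their $h$-image at distance $j-1$. I would first set $c_O(r)=n$; this handles level $0$. Then, proceeding level by level, suppose $c_O$ has been defined on all vertices at distance $\le j$, injectively, with $c_O(h(m))=e(c_O(m))$ whenever $d(m)\le j$, and with the additional invariant that $c_O$ maps the level-$j$ vertices into $V_n$ in such a way that we still have ``room'': namely each value $c_O(m)$ with $d(m)=j$ lies in $V_n$ and has infinitely many $e$-preimages inside $T_n$ not yet used. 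Given a level-$j$ vertex $m$ with $c_O(m)=v$, its children in $O$ are the finitely or countably infinitely many vertices $m'$ with $h(m')=m$; enumerate them (say by the natural order of $\N$) and define $c_O(m') = \beta_v(\ell)$ where $m'$ is the $\ell$-th child.

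The key point making this work is the structure recorded in properties 1, 7, 9(a) of $e$ and in the definition of the $\beta_v$: for every $n\in P\cup C$ and every $v\in V_n$, the tree $G_{(e,T_n)}$ (equivalently $O_e^n$ when $n\in C$ or $n=2^k$ with $k$ odd) is a function tree rooted at $n$ in which \emph{every} vertex has infinitely many incoming edges coming from inside $T_n$, and $\beta_v$ is precisely an injective enumeration of those incoming edges $e^{-1}(v)\cap T_n$. Thus $\beta_v(\ell)$ is always defined for every $\ell\in\N$, it lies in $V_n$, $e(\beta_v(\ell))=v$ by construction, and the sets $\beta_v(\N)$ for distinct $v\in V_n$ are pairwise disjoint (since $e$ is a function, a vertex has a unique $e$-image). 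The latter disjointness is what keeps $c_O$ injective as we pass from one level to the next: two vertices with different $h$-images get sent to disjoint preimage-sets, and two children of the same vertex get distinct values because $\beta_v$ is injective; and no child value can coincide with a value already used at level $\le j$ because each $\beta_v(\ell)>v+1\ge 2$ and a simple check (using property 4, $e(m)<m$ off $P\cup C$, together with the root having value $n$) shows the newly introduced values are strictly ``deeper'' in $T_n$ than anything used so far. One should also observe that $\beta_v(\N)\subseteq T_n$ for $v$ at distance $\ge 1$ (so the image stays inside $V_n=T_n\cup\{n\}$), which is immediate from the definition of $\beta_v$.

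The verification of the commutation relation $c_O(h(m))=e(c_O(m))$ for all $m\in V\setminus\{r\}$ is then automatic from the construction: if $d(m)=j+1$ then $m$ is the $\ell$-th child of the vertex $h(m)$ at distance $j$, so $c_O(m)=\beta_{c_O(h(m))}(\ell)$ and hence $e(c_O(m))=e\bigl(\beta_{c_O(h(m))}(\ell)\bigr)=c_O(h(m))$ by the defining property of $\beta_{c_O(h(m))}$. Finally, since $O$ is a tree, every vertex is reached at some finite distance from $r$, so $c_O$ is totally defined on $V$; and injectivity, established level by level, holds globally because any two vertices at the same or different levels were separated in the argument above.

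The main obstacle I anticipate is bookkeeping the injectivity cleanly across \emph{all} levels at once rather than just between consecutive levels: one must be sure that a value $\beta_v(\ell)$ introduced at level $j+1$ cannot accidentally equal some $\beta_{v'}(\ell')$ introduced much earlier at level $j'<j$. This is handled by the observation that $e(\beta_v(\ell))=v$ and $e(\beta_{v'}(\ell'))=v'$, so if the two values were equal then $v=v'$, forcing the two vertices to have the same $h$-image in $O$; but then they would be children of the same vertex, whence $\ell=\ell'$ by injectivity of $\beta_v$, a contradiction. So the single fact that $e$ is a function — already used repeatedly in Section~2 — collapses the global injectivity check to the per-vertex injectivity of the $\beta_v$'s, which is built in. Beyond that, the only care needed is to treat the root correctly: its preimages in $O$ must be embedded via $\beta_n$, the enumeration of $e^{-1}(n)\cap T_n$ (not all of $e^{-1}(n)$), exactly as the displayed definition of $\beta_n$ was set up to allow.
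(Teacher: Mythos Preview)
Your proof is correct and uses the same core mechanism as the paper: send the root to $n$, and whenever a vertex $m$ has been assigned a value $v\in V_n$, assign its children in $O$ to $\beta_v(1),\beta_v(2),\ldots$. The commutation $c_O(h(m))=e(c_O(m))$ and injectivity then follow exactly as you argue, from the defining property of $\beta_v$ and from $e$ being a function.

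The only difference is organizational. You traverse $O$ level by level (breadth-first from the root), so that when you reach a vertex its parent has already been assigned. The paper instead fixes an \emph{arbitrary} enumeration $\pi$ of $V$ and processes vertices in that order; since $\pi(j)$ may appear before its ancestors, the paper must include an extra step that traces the path from $\pi(j)$ up to the nearest already-assigned ancestor and fills in the assignments along that path before continuing. Your breadth-first ordering eliminates this backtracking entirely and makes the inductive invariant cleaner; the paper's version is more general in that it works for any enumeration, at the cost of the additional path-following machinery. Both arrive at the same map $c_O$.

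One small remark on your injectivity discussion: in the final paragraph, the step ``$v=v'$ forces the two vertices to have the same $h$-image in $O$'' implicitly uses the inductive hypothesis that $c_O$ is already injective on levels $\le j$ (since $v=c_O(h(m))$ and $v'=c_O(h(m'))$). You set this up correctly in the inductive statement earlier, so the argument is sound; just be sure to invoke the hypothesis explicitly at that point.
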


\begin{proof}
Let $\pi$ be an enumeration function of the vertex set $V$. We may assume that $\pi(0)=r$. Also, let $\alpha_u$ be an enumeration function of the set $h^{-1}(u)\cap V$ 
for each vertex $u\in V$.  

We define $c_O$ by an ascending sequence of subsets $\Delta_j \subseteq V \times V_n$. In every step we add pairs $(a,b)$ to $\Delta_j$ such that $a\in h^{-1}(u)$ and $b\in e^{-1}(v)$ for some $(u,v)\in \Delta_j$. Therefore, $h(a)=u$, $e(b)=v$ and $c_O(h(a))=c_O(u)=v=e(b)=e(c_O(a))$. We will eventually define $c_O$ in such a way that $\Delta_j\subseteq c_O$ for all $j \in \N$.
   
To begin with, let 
$$
\Delta_0=\{(r,n)\} \cup \{(\alpha_{r}(i), \beta_{n}(i))
\mid i =  0,\dots, |h^{-1}(r)\cap V| - 1 \},
$$ 
where $\beta_{n}$ is an enumeration of the set $e^{-1}(n)\cap T_n$. Now, for any pair $(\alpha_{r}(i), \beta_{n}(i))\in \Delta_0$, 
we have $h(\alpha_{r}(i))=r$ and $e(\beta_{n}(i))=n$, and $c_O(h(\alpha_{r}(i)))=c_O(r)=n=e(\beta_{n}(i))=e(c_O(\alpha_{r}(i)))$.

The sets $\Delta_j$ for $j \ge 1$ are defined inductively:
\begin{enumerate}
\item If $(\pi(j),v)\in \Delta_{j-1}$ for some $v\in V_n$, then we have two cases: 
\begin{enumerate}
\item If for all $u\in h^{-1}(\pi(j))$, there exists a vertex $v' \in V_n$ such that $(u,v')\in \Delta_{j-1}$, then 
set $\Delta_j=\Delta_{j-1}$. The construction works in such a way that if such a vertex $v'$ exists for \emph{some} predecessor $u$ then there is one for \emph{all} predecessors.

\item Otherwise, set
\[\Delta_j=\Delta_{j-1} \cup \{(\alpha_{\pi(j)}(i), \beta_{v}(i))
\mid i= 0,\dots, |h^{-1}(\pi(j))\cap V|-1\}.
\]
\end{enumerate}
\item If there is no $v\in V_n$ such that $(\pi(j),v)\in \Delta_{j-1}$, there exists a (finite and unique) shortest path $p_0,p_1,p_2,\dots,p_k$ in $O$, where $p_0 = \pi(j)$ and $(p_k,v)\in \Delta_{j-1}$ for some $v\in V_n$. Hence for some $k\ge 1$ and some $v\in V_n$, we must have $(h^k(\pi(j)),v)\in \Delta_{j-1}$ since $O$ is a function tree with root $r$ and at least $(r,n)\in \Delta_{j-1}$. Let $p_i=h^i(\pi(j))$ for $i=0,\dots, k$. Then let $v_k=v$ and construct sets $\Sigma_t$, counting downwards for $t=k,k-1,\dots, 0$, by defining
$$
\Sigma_t= \{(\alpha_{p_t}(i),\beta_{v_t}(i))\mid  i=0,\dots, |h^{-1}(p_t)\cap V| - 1 \},  
$$
and $v_{t-1}= v'$ such that $(p_{t-1},v')\in \Sigma_t$ for $t=k,\dots,1$. Let $\Sigma=\bigcup_{t=0}^k \Sigma_t$ and $\Delta_j=\Delta_{j-1} \cup \Sigma$. We note the following:
\begin{itemize}
\item For the path $p_0,p_1,p_2,\dots,p_k$, there are pairs $(p_i,v_i)\in \Sigma$ for $i=0,\dots, k$ such that $e(v_0)=v_1, e(v_1)=v_2,
\dots, e(v_{k-1})=v_k$, and $(p_k,v_k) \in \Delta_{j-1}$. Therefore, $h(p_i)=p_{i+1}$ and $e(v_i)=v_{i+1}$ for $i=0,\dots, k-1$, and $(p_i,v_i)\in \Delta_j\subseteq c_O$. Therefore, $c_O(h(p_i))=v_{i+1}=e(v_i)=e(c_O(p_{i}))$.  

\item For all pairs $(\alpha_{p_t}(i),\beta_{v_t}(i))\in \Sigma\setminus \{(p_t,v_t)\mid t=0,1,\dots, k\}$, we have 
$h(\alpha_{p_t}(i))=p_t$, $e(\beta_{v_t}(i))=v_t$, and $(p_t,v_t)\in \Sigma$. Therefore, 
$c_O(h(\alpha_{p_t}(i)))=c_O(p_t)=v_t=e(\beta_{v_t}(i))=e(c_O(\alpha_{p_t}(i)))$.
\end{itemize}
\end{enumerate}

Now $\Delta_j\subseteq \Delta_{j+1}$ for all $j \in \N$. Set $\Delta=\bigcup_{j}^{\infty} \Delta_j$. Finally, define $c_O\colon V \to V_n$ by
$c_O(u)=v$ if and only if $(u,v)\in \Delta$.

The injectivity of $c_O$ is clear from the construction of $\Delta$.
\end{proof}

We shall now prove that for every function $h \colon \N \to \N$, every connected subgraph $G$ of $G_h$ can be embedded into $G_e$, that is, there is a subgraph of $G_e$ that is isomorphic to $G$. 
For all $k \in \Z_+$, let
\begin{align*}
C_k=\{n\mid n \text{ is the smallest number in some cycle of length } k \text{ in } G_e\}.
\end{align*}
We prove the claim first for subgraphs that contain a cycle.

\begin{lemma}\label{lem:cyclesg}
Let $h\colon \N\to \N$ be a function and let $G=(V,E)$ be a connected subgraph of $G_h^n$ with a cycle of length $k$. Then for all numbers $i\in C_k$ with $G_e^i=(W,F)$, 
there exists an injective function $c_i\colon V\to W$ such that $c_i(h(j))=e(c_i(j))$ for all $j\in V$.
\end{lemma}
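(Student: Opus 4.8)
The plan is to reduce everything to Lemma~\ref{lem:ot_tn}, handling the cycle of $G$ and the cycle of $G_e^i$ by the obvious length-$k$ cycle isomorphism, and mapping the function trees hanging off the cycle vertices one at a time.

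First I would fix notation. Write the cycle of $G$ as $v_0\to v_1\to\dots\to v_{k-1}\to v_0$, so that $h(v_j)=v_{j+1}$ with indices read modulo $k$. Since $i\in C_k$, the number $i$ lies on a cycle of $G_e$ of length exactly $k$: its component $G_e^i$ has a unique cycle (a connected component has at most one cycle), and that cycle has length $k$ by the definition of $C_k$. Write this cycle as $w_0=i,\ w_1=e(i),\ \dots,\ w_{k-1}=e^{k-1}(i)$, so $e(w_j)=w_{j+1}$ (indices mod $k$). Each $w_j$ lies in $C$, so by property~8 of $e$ we have $O_e^{w_j}=G_{(e,T_{w_j})}$, a function tree rooted at $w_j$ with vertex set $V_{w_j}=T_{w_j}\cup\{w_j\}\subseteq W$; indeed, by case~1 of Lemma~\ref{lem:formofcc} the component $G_e^i=(W,F)$ is precisely this $k$-cycle together with the trees $O_e^{w_j}$, $j=0,\dots,k-1$.

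Next I would peel the cycle off $G$. By case~1 of Lemma~\ref{lem:formofcc}, every vertex of $G$ outside the cycle has a forward $h$-orbit that first meets the cycle at a unique vertex $v_j$; let $O^{(j)}$ be the subtree of $G$ consisting of $v_j$ together with all such vertices attached at $v_j$. Then each $O^{(j)}$ is a tree rooted at $v_j$ which is a subgraph of $G_h$, the vertex sets $V(O^{(j)})$ are pairwise disjoint, and $V=\bigcup_{j=0}^{k-1}V(O^{(j)})$. Applying Lemma~\ref{lem:ot_tn} to $O^{(j)}$ with target vertex $n=w_j\in P\cup C$ yields an injective map $c^{(j)}\colon V(O^{(j)})\to V_{w_j}$ with $c^{(j)}(v_j)=w_j$ and $c^{(j)}(h(m))=e(c^{(j)}(m))$ for all $m\in V(O^{(j)})\setminus\{v_j\}$.

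Finally I would glue the pieces: set $c_i=\bigcup_{j=0}^{k-1}c^{(j)}$, which is well defined since the domains $V(O^{(j)})$ partition $V$. For $m\in V(O^{(j)})\setminus\{v_j\}$ the relation $c_i(h(m))=e(c_i(m))$ is immediate from Lemma~\ref{lem:ot_tn}, noting that $h(m)$ stays in $V(O^{(j)})$ (possibly equal to $v_j$). For $m=v_j$ one computes $c_i(h(v_j))=c_i(v_{j+1})=w_{j+1}=e(w_j)=e(c_i(v_j))$. Injectivity of $c_i$ follows because each $c^{(j)}$ is injective and their images $V_{w_j}=T_{w_j}\cup\{w_j\}$ are pairwise disjoint: the sets $T_n$ with $n\in P\cup C$ partition $T$ by property~3, while $w_0,\dots,w_{k-1}$ are distinct elements of $C$ and hence belong to none of the $T_n$. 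The only substantial ingredient here is Lemma~\ref{lem:ot_tn}; the rest is assembly. The one point deserving care is the peeling step: that the subtrees $O^{(j)}$ really are pairwise vertex-disjoint and jointly exhaust $V$ (so that the union defining $c_i$ is consistent without any further checking), which is exactly the structural description of a cyclic connected component provided by case~1 of Lemma~\ref{lem:formofcc}.
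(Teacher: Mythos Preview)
Your proof is correct and follows essentially the same approach as the paper: match the two length-$k$ cycles vertex by vertex, then apply Lemma~\ref{lem:ot_tn} to send each function tree $O_h^{v_j}$ into $G_{(e,T_{w_j})}$, and take the union. If anything, you are more careful than the paper in explicitly verifying injectivity via the disjointness of the sets $V_{w_j}$.
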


\begin{proof}
By case 1 of Lemma \ref{lem:formofcc}, $G_h^n$ consists of a cycle with function trees rooted at each vertex on the cycle. Assume that the cycle in $G_h^n$ is $C^{m_1}_h$ so that $h(m_1)=m_2,\dots, h(m_{k})=m_1$, and let
$O_h^{m_j}$ be the function trees rooted at $m_j$ for $j=1,\dots, k$. 

Let $i\in C_k$, and denote $C_e^{i}=\{t_1,\dots, t_k\}$ where $e(t_j)=t_{j+1}$ for $j=1,\dots, k-1$ and $e(t_k)=t_1$. We may assume $i=t_1$. 

We define $c_i$ again as a subset of $V\times W$. First
$$
I=\{(m_j,t_j)\mid j=1,\dots, k\}.  
$$ 
Then for each function tree $O_h^{m_j}$ in $G$,  let $c_{O_h^{m_j}}$ be the mapping given in  Lemma~\ref{lem:ot_tn} from $O_h^{m_j}$ to $G_{(e,T_{t_j})}$. Finally, let 
$$
c_i=I\cup \bigcup_{j=1}^k  c_{O_h^{m_j}}
$$

Now, by Lemma~\ref{lem:ot_tn}, we have $c_{O_h^{m_j}}(h(m)) = e(c_{O_h^{m_j}}(m))$ for all $m_j \in C_h^{m_1}$ and for all $m \in O_h^{m_j}\setminus \{m_j\}$. For all $m_j\in C^{m_1}_h$ and $j=1,\dots, k-1$, we have $c_i(h(m_j))=c_i(m_{j+1})=t_{j+1}=e(t_j)=e(c_i(m_j))$, and for $m_k$, we have  
$c_i(h(m_k))=c_i(m_{1})=t_{1}=e(t_k)=e(c_i(m_k))$. This proves the claim.

\end{proof}

\begin{lemma}\label{lem:infdessg}
Let $h\colon \N\to \N$ be a function and $G_h^n=(V,E)$ such that it does not contain cycles. Let $i=2^m$, where $m \in \N$ is odd, and denote $G_e^i=(W,F)$; then
there exists an injective function $c_i\colon V\to W$ such that $c_i(h(j))=e(c_i(j))$ for all $j\in V$.
\end{lemma}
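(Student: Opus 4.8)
\textbf{Proof plan for Lemma~\ref{lem:infdessg}.}

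The plan is to mimic the structure of Lemma~\ref{lem:cyclesg}, replacing the cycle backbone by the not-ultimately-periodic backbone that sits inside $G_e^i$ when $i=2^m$ with $m$ odd. First I would describe the target backbone in $G_e$. By property~6 of $e$, starting from $n=2^m$ with $m$ odd, the sequence $(e^k(n))_{k=0}^\infty = 2^m, 2^{2m}, 2^{4m}, \dots$ is an infinite path that is not ultimately periodic, and it is the unique such path in its connected component $G_e^i=(W,F)$. On the other side, in $G_h^n=(V,E)$ of case~2 of Lemma~\ref{lem:formofcc} there are no cycles, and by Lemma~\ref{form1} there is a distinguished two-way infinite ``spine'': pick the vertex $n$ itself and consider its descendant chain $n=q_0, q_1=h(q_0), q_2=h^2(q_0),\dots$, which is infinite and not ultimately periodic; every vertex of $V$ has a common descendant with $n$, hence eventually feeds into this chain.

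The core step is to set up a bijection between the spine vertices of $G_h^n$ and the backbone vertices $2^{2^{r}m}$ ($r\ge 0$) of $G_e^i$ that respects the two ``successor'' maps, and then hang the function trees off the spine using Lemma~\ref{lem:ot_tn}. Concretely: enumerate the spine of $G_h^n$ as $\dots \to s_{-1}\to s_0\to s_1\to\dots$ (or, if the chain from $n$ has no predecessors on the spine, as $s_0\to s_1\to\dots$); similarly enumerate the backbone of $G_e^i$ as $b_0\to b_1\to\dots$ with $b_r=2^{2^r m}$ and $e(b_r)=b_{r+1}$. Match spine vertex to backbone vertex so that $h(s_j)$ corresponds to $e(b_{?})$; because the $h$-spine may be two-way infinite while the $e$-backbone is only one-way infinite, I first need to truncate or realign — but here I can use the fact that from $n$ the backbone $b_0,b_1,\dots$ is itself one-way infinite going forward, so it suffices to match the forward ray $n=q_0\to q_1\to\cdots$ in $G_h^n$ onto $b_0\to b_1\to\cdots$, set $c_i(q_j)=b_j$, and then observe that every vertex of $V$ lies in exactly one of the rooted trees $O_h^{q_j}$ (the predecessors of $q_j$ that reach $q_j$ without passing through $q_{j+1}$), so we may apply Lemma~\ref{lem:ot_tn} to each $O_h^{q_j}$ to get an injection into $V_{b_j}=G_{(e,T_{b_j})}$ sending the root $q_j$ to $b_j$. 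Taking the union $c_i = \{(q_j,b_j)\mid j\ge 0\}\cup\bigcup_{j\ge 0} c_{O_h^{q_j}}$ gives the desired map, and the verification $c_i(h(x))=e(c_i(x))$ splits into the spine case ($c_i(h(q_j))=c_i(q_{j+1})=b_{j+1}=e(b_j)=e(c_i(q_j))$) and the tree case (immediate from Lemma~\ref{lem:ot_tn}).

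Two points need care. One is injectivity: the trees $O_h^{q_j}$ are vertex-disjoint and their images land in the disjoint sets $T_{b_j}$ (disjoint by property~3 of $e$, since the $b_j$ are distinct elements of $P$), and the spine images $b_j$ are the distinct roots, so global injectivity follows from the injectivity of each piece. The other, which I expect to be the main obstacle, is making sure the ``backbone'' of $G_e^i$ genuinely absorbs an arbitrary non-ultimately-periodic path: I must check via case~9(a) and case~9(b) of the properties of $e$ (or rather the description in property~6 and property~10) that for $i=2^m$ with $m$ odd, the full component $G_e^i$ really is the forward ray $2^m\to 2^{2m}\to 2^{4m}\to\cdots$ together with the trees $O_e^{2^{2^r m}}$ hanging off each $2^{2^r m}$, with no extra predecessors of $2^m$ itself outside $T_{2^m}$ — i.e.\ that $\pre{2^m}=T_{2^m}$ (property~9(a)) — so that the image of $c_i$ stays inside $W$. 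Once that structural fact is in hand, the argument is a direct transcription of the proof of Lemma~\ref{lem:cyclesg} with the cycle $C_e^i$ replaced by the ray $(2^{2^r m})_{r\ge0}$.
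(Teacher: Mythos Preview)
Your approach is essentially the paper's: map the forward ray $(h^k(n))_{k\ge 0}$ onto the backbone $(2^{2^k m})_{k\ge 0}$, then for each spine vertex $q_j$ use Lemma~\ref{lem:ot_tn} to embed the tree of its ``new'' predecessors into $G_{(e,T_{b_j})}$, and take the union. Two small slips to fix: your parenthetical should read ``without passing through $q_{j-1}$'' rather than $q_{j+1}$ (a successor never lies on a predecessor path, so as written the condition is vacuous), and for $j\ge 1$ the tree you need is not $O_h^{q_j}$ in the paper's sense---since there is no cycle, $O_h^{q_j}$ contains \emph{all} predecessors of $q_j$ and these trees are nested, not disjoint---but rather the subtree the paper denotes $G_{(h,Y_{q_j})}$ with $Y_{q_j}=h^-(q_j)\setminus(\{q_{j-1}\}\cup h^-(q_{j-1}))$; your later disjointness argument makes clear this is what you intended.
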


\begin{proof}
Now, the sequence $(h^k(n))_{k=0}^\infty$ is not ultimately periodic. We define the mapping $c_i$ again as a subset of $V\times W$, by defining sets $I_k$ such that $c_i=\bigcup_{k=0}^\infty I_k$.

First, let
$$
I_0=\{(h^k(n),2^{2^k  m})\mid k \in \N \}\cup c_{O_n^h}
$$ 
where  $c_{O_n^h}$ is the mapping of Lemma~\ref{lem:ot_tn} from $O_n^h$ into $G_{(e,T_i)}$. In other words, $I_0$ defines $c_i$ for all predecessors of $n$ (i.e., for all vertices in $h^{-}(n)$), for the vertex $n$, and for all descendants of $n$.   
Indeed, $I_0$ maps the predecessors and descendants of $n$ correctly by Lemma~\ref{lem:ot_tn} and the fact that  $I_0(h^k(n))=2^{2^k m}=e(2^{2^{k-1} m})= e(I_0(h^{k-1}(n))$ for all $k \in \Z_+$. 

What remains to consider is the function trees $O_h^u$, where $u = h^k(n)$ for $k \in \Z_+$. 
Let 
$$
Y_u=h^{-}(u)\setminus \left(\{h^{k-1}(n)\}\cup h^{-}(h^{k-1}(n))\right).  
$$  
Hence $Y_u$ consists of all predecessors of $u = h^k(n)$ except the vertex
$h^{k-1}(n)$ and its predecessors.

We define $I_j$, for all $j \in \Z_+$, so that $\bigcup_{j=0}^{k-1}I_{j}$ already defines images for vertices in $\{h^{k-1}(n)\} \cup h^{-}(h^{k-1}(n)) =h^{-}(u)\setminus Y_u$, and $I_k$ defines the images for $Y_u$.

Let $I_k$ to be the mapping $c_{G_{(h,Y_u)}}$ from function tree $G_{(h,Y_u)}$ into $G_{(e,T_{2^{2^k m}})}$, given by Lemma~\ref{lem:ot_tn}. Let $c_i=\bigcup_{k=0}^\infty I_k$. Since for all vertices $v\in V$, either $v \in O_h^n$, or $v \in Y_u$ for some $u = h^k(n)$ and $k \in \Z_+$, or $v=h^k(n)$ for some $k \in \Z_+$, and since the claim holds for vertices in $O_n^h$ and in the sets $Y_u$ by Lemma~\ref{lem:ot_tn}, $c_i$ satisfies the claim. 
\end{proof}

We are ready to prove our main theorem.

\begin{theorem}\label{univfunction}
Let $h\colon \N\to \N$ be a function. There exists an injective function $c_h\colon \N\to \N$ such that $c_h(h(n))=e(c_h(n))$ for all $n \in \N$.
\end{theorem}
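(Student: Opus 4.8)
The plan is to decompose the full graph $G_h$ into its connected components using Lemma~\ref{lem:formofcc}, handle each component separately with the two technical lemmas already proved (Lemma~\ref{lem:cyclesg} and Lemma~\ref{lem:infdessg}), and then glue the resulting partial codings together into a single injective $c_h$ defined on all of $\N$. Since $\N$ is the disjoint union of the vertex sets of the connected components $G_h^{n_1}, G_h^{n_2},\dots$ of $G_h$ (there are at most countably many, and we may enumerate representatives $n_1,n_2,\dots$ of them), it suffices to map each component injectively into $G_e$ in such a way that the images of distinct components are pairwise disjoint and each component map respects the relation $c(h(m))=e(c(m))$.

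The key steps, in order, are as follows. First, fix an enumeration $n_1,n_2,\dots$ of representatives of the connected components of $G_h$; by Lemma~\ref{form1} each component is either of case 1 (contains a unique cycle, of some length $k_j$) or of case 2 (no cycle, with every vertex starting a non-ultimately-periodic path). Second, for each component of case 1 with cycle length $k_j$, apply Lemma~\ref{lem:cyclesg}: it gives, for \emph{every} $i\in C_{k_j}$, an injective $c_i$ from that component into $G_e^i$ respecting the defining relation. For each component of case 2, apply Lemma~\ref{lem:infdessg}: it gives, for \emph{every} odd $m$, an injective $c_i$ with $i=2^m$ from the component into $G_e^i$ respecting the relation. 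Third — and this is the crucial bookkeeping step — choose the target components of $G_e$ to be \emph{pairwise distinct}: since each $C_k$ is infinite (property 5 of $e$: for each $k$ there are infinitely many $k$-cycles in $G_e$) and there are infinitely many odd $m$ (giving infinitely many case-2 components $G_e^{2^m}$, all distinct and pairwise disjoint by properties 6 and 10 of $e$), we can injectively assign to each component of $G_h$ a distinct component of $G_e$ of the matching type. Because distinct components of $G_e$ have disjoint vertex sets, the union $c_h=\bigcup_j c_{i_j}$ of all these partial maps is a well-defined function $\N\to\N$. Fourth, verify that $c_h$ is injective (each $c_{i_j}$ is injective and their ranges lie in disjoint components of $G_e$) and that $c_h(h(n))=e(c_h(n))$ for all $n\in\N$ (each $n$ lies in exactly one component, and the relevant partial map already satisfies this).

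The main obstacle I expect is not any single deep argument but the careful matching and disjointness bookkeeping: one must be sure that case-1 components of different cycle lengths, case-1 components of the same cycle length, and case-2 components can all be routed to mutually disjoint pieces of $G_e$ simultaneously. This is exactly what properties 5, 6 and 10 of $e$ were set up to guarantee — there is an inexhaustible supply of cycles of each length and of non-ultimately-periodic ``spines'' — so the assignment $j\mapsto i_j$ can be built greedily (process $n_1,n_2,\dots$ in turn, each time picking a fresh target component of the correct type not yet used). A minor point to check is that in case 1 we only need one representative $n_j$ per component even though the component has many vertices; Lemma~\ref{lem:cyclesg} already handles the whole component $G_h^{n_j}$ from a single call, so this is immediate. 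Once the target assignment is fixed, the verification of the two required properties of $c_h$ is routine and inherited directly from the cited lemmas.
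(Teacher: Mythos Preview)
Your proposal is correct and follows essentially the same approach as the paper: enumerate the connected components of $G_h$, send each to a fresh component of $G_e$ of matching type via Lemma~\ref{lem:cyclesg} or Lemma~\ref{lem:infdessg}, and take the union. The paper's proof is terser about the disjointness bookkeeping (it simply sends the $i$th component of $G_h$ to the $i$th element of $C_k$ or to $2^{2i-1}$ as appropriate), but your greedy assignment is an equally valid and arguably more clearly justified instantiation of the same idea.
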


\begin{proof}
Let $G_h=(\N,E)$ be the graph of $h$. In $G_h$, there are countably many connected subgraphs. Let $\pi$ be an enumeration function of these subgraphs. Now, for all $i \in \Z_+$, if $\pi(i)$ is 
\begin{enumerate}
\item a subgraph of type 1 in Lemma~\ref{lem:formofcc} with a cycle of length $k$, let $I_i$ be the mapping $c_t$ given in Lemma~\ref{lem:cyclesg} to the $i$th cycle of length $k$ (in other words, $t$ is the $i$'th number in $C_k$).

\item a subgraph of type 2 in Lemma~\ref{lem:formofcc}, let $I_i$ be the mapping $c_t$ 
given in Lemma~\ref{lem:infdessg}, where $t=2^{m}$ and $m$ is the $i$th odd number (so $m=2i-1$). 
\end{enumerate}   

Set $c_h=\bigcup_{i=1}^\infty I_k$. The claim now follows from Lemmas~\ref{lem:cyclesg} and \ref{lem:infdessg}. 
    
\end{proof}

Since the proof did not depend on arithmetical properties of $\N$ (such as summation or ordering) we can present a stronger version of the theorem:

\begin{corollary}
Let $Q$ be a countable set and let $h\colon Q\to Q$ be a function. There exists an injective function $c_h\colon Q\to \N$ such that $c_h(h(x))=e(c_h(x))$ for all $x \in Q$.
\end{corollary}

In addition to the above ``function formulation'', there are other viewpoints that give different interpretations for the result.

\begin{corollary}[The graph formulation]
Every function graph is isomorphic to a subgraph of $G_e$.
\end{corollary}

It is well known that every directed graph $G = (V,E)$ induces a preorder $\preceq$ on $V$ by the condition $u \preceq v$ if and only if $u$ reaches $v$ or $u = v$.

\begin{corollary}[The preorder formulation] \label{preord}
Every preorder induced by a function graph is order-isomorphic to the preorder induced by some subgraph of $G_e$.
\end{corollary}

Finally, we note that our result cannot have a form of the end-extension property mentioned in the introduction. In our version, the end-extension would correspond to embedding a finite enumeration function in $G_e$ and then embedding that graph to a new graph. Given a finite initial part of a function $h \colon \N \to \N$ (that is, a finite enumeration), it is easy to embed this initial part in $G_e$. But in order to choose an extendable initial coding, we need details of the connected components of the vertices of the initial part in $G_h$. More precisely, we need to know which type of component (in Lemma \ref{lem:formofcc}) every vertex is in, and if a vertex is in a subgraph of type 1, that is, the subgraph contains a cycle, we need to know the length of that cycle, and the length of the path leading to the cycle from the vertex.

\paragraph{Acknowledgements.} The authors are grateful to Joseph Almog for his help and discussions on the topic. Most grateful we are to the referees of the preliminary version of this paper who helped us realize (together with Almog) what we actually had proved.

\end{document}